	\crefname{algocf}{Algorithm}{Algorithms}
	\Crefname{algocf}{Algorithm}{Algorithms}
\newtheorem{theorem}{Theorem}
	\newtheorem{lemma}[theorem]{Lemma}
	\newtheorem{proposition}[theorem]{Proposition}
	\newtheorem{definition}[theorem]{Definition}
	\theoremstyle{definition}
	\theoremstyle{remark}
	\newtheorem*{note*}{Note}
	\newtheorem*{remark*}{Remark}
	\theoremstyle{plain}
\newcommand{\cO}{\ensuremath{\mathcal{O}}\xspace}
\newcommand{\cT}{\ensuremath{\mathcal{T}}\xspace}
\newcommand{\cS}{\ensuremath{\mathbb{C}}\xspace}
\newcommand{\gor}{\ensuremath{\text{OR}^+}\xspace}
\title{A general framework for enumerating equivalence classes of solutions\footnote{This is a preprint version of the following published paper: Wang, Y., Mary, A., Sagot, MF. et al. A General Framework for Enumerating Equivalence Classes of Solutions. Algorithmica (2023). https://doi.org/10.1007/s00453-023-01131-1}} 
\author{Yishu Wang\footnote{
Université de Lyon, Université Lyon 1, CNRS, Laboratoire de Biométrie et Biologie Evolutive UMR 5558, F-69622 Villeurbanne, France. Inria Grenoble Rhône-Alpes, Villeurbanne, France. \href{mailto:yishu.wang@univ-lyon1.fr}{yishu.wang@univ-lyon1.fr}. \href{mailto:arnaud.mary@univ-lyon1.fr}{arnaud.mary@univ-lyon1.fr}}, 
Arnaud Mary\textsuperscript{*}, 
Marie-France Sagot\footnote{Inria Grenoble Rhône-Alpes, Villeurbanne, France. Université de Lyon, Université Lyon 1, CNRS, Laboratoire de Biométrie et Biologie Evolutive UMR 5558, F-69622 Villeurbanne, France. \href{mailto:marie-france.sagot@inria.fr}{marie-france.sagot@inria.fr}},
Blerina Sinaimeri\footnote{Luiss University, Rome, Italy. Inria Grenoble Rhône-Alpes, Villeurbanne, France. Université de Lyon, Université Lyon 1, CNRS, Laboratoire de Biométrie et Biologie Evolutive UMR 5558, F-69622 Villeurbanne, France. \href{mailto:bsinaimeri@luiss.it}{bsinaimeri@luiss.it}}
}
\date{}
\begin{document}

\maketitle

\begin{abstract}
When a problem has more than one solution, it is often important, depending on the underlying context, to enumerate (i.e., to list) them all. Even when the enumeration can be done in polynomial delay, that is, spending no more than polynomial time to go from one solution to the next, this can be costly as the number of solutions themselves may be huge, including sometimes exponential. Furthermore, depending on the application, many of these solutions can be considered equivalent. The problem of an efficient enumeration of 
the equivalence classes or of one representative per class (without generating all the solutions), although identified as a need in many areas, has been addressed only for very few specific cases.  In this paper, we provide a general framework that solves this  problem in polynomial delay for a wide variety of  
contexts, including optimization ones that can be addressed by dynamic programming algorithms, and for certain types of equivalence relations between solutions.   
\end{abstract}

\section{Introduction}\label{section:introduction}

Enumerating the solutions of an optimization problem solved by a dynamic programming algorithm (DP-algorithm) is a classical and well-known  question. However, many enumeration problems have a huge number of solutions in practice, which might be an issue. From a computational point of view, since the number of solutions is a lower bound on the time complexity of any enumeration algorithm, it might make the algorithm impractical on real instances. Furthermore, even if the number of solutions is reasonable enough to be enumerated, the purpose of some applications is to give the output of the algorithm to a human specialist 
(this is necessary, for example, when some of the constraints of the problem are subjective and cannot be modeled).

Indeed, one of the advantages of an enumeration algorithm compared to an optimization one which in general outputs only one optimal solution, is to be able to understand the space of solutions. While this is important in many cases, no human can understand an output composed of billions of solutions. 

The approach generally used to address this consists of enumerating all solutions, and then applying some type of clustering (grouping) algorithm to the set of optimal solutions. The final output presented to the user would then be some ``representative description'' of the clusters (groups) themselves. However, since the number of solutions is a lower bound for the total execution time of any enumeration algorithm, the first step of such a strategy becomes impossible when the number of solutions is too big. A natural question is then whether it would be possible to enumerate directly what we just called a ``representative description'' of the clusters of solutions. This could be for instance an element per cluster. Sometimes a cluster can also be seen as a set of characteristics that the solutions within the cluster share. In such a case, the representative description of a cluster could then be such a set of characteristics. A particularly convenient situation is however when the clusters correspond to equivalence classes of an equivalence relation over the set of solutions that we could establish \emph{a priori}. The output could be in this case the quotient space of the equivalence relation. 

Notice that the enumeration of equivalence classes of solutions is a combinatorial problem that could be solved exactly given a well-defined equivalence relation, and unlike data analysis methods such as incremental clustering, it does not require the definition of a similarity or dissimilarity measure between solutions which, depending on the mathematical nature of the solutions (numerical values, graphs, functions on graphs, etc.), can be difficult to define or costly to compute. 

The problem this paper addresses is how to perform the task of enumerating equivalence classes of solutions with polynomial delay for a wide variety of problems (including optimization problems solved by dynamic programming algorithms), for certain types of equivalence relations between solutions.   

The problem of enumerating equivalence classes, and particularly the generation of representative solutions is a challenge in the context of enumeration algorithms. It has been identified as a need in different areas, such as Genome Rearrangements \cite{Braga2008}, Artificial Intelligence \cite{andersson1997} or Pattern Matching \cite{Blumer87,Narisawa2007}. It was listed as an important open problem in a recent Dagstuhl workshop on ``Algorithmic Enumeration: Output-sensitive, Input-Sensitive, Parameterized, Approximative'' (see, e.g.,  Sections~4.2 and 4.10 in \cite{Dagstuhl}). To the best of our knowledge, this challenge has  been addressed only for some few  specific problems in the literature (e.g., \cite{Angel2011,Braga2008,Molinaro2014,Morrison2015}).

To enumerate equivalence classes, we go through an intermediate problem, namely the enumeration of colored subtrees in  acyclic decomposable AND/OR graphs (ad-AND/OR graph). The paper is organized as follows: \cref{sec:enumeration_AND_OR} provides an algorithm to enumerate with polynomial delay colored subtrees in ad-AND/OR graphs; \cref{section:application} details how this algorithm applies to the enumeration of equivalence classes in DP-problems. In that direction, we present some examples from well-known optimization problems in the literature. Finally, in \cref{sec:conclusions} we conclude with some open problems. 

\section{Enumeration of colored subtrees in an acyclic decomposable AND/OR graph}\label{sec:enumeration_AND_OR}

\subsection{AND/OR graphs and solution subtrees}
An \emph{AND/OR graph} (see, for example \cite{10.1145/359657.359664, Nilsson}) is a well-known structure in the field of Logic and  Artificial Intelligence (AI) that represents problem solving and problem decomposition. In this paper, we consider a particular flavor of AND/OR graphs known as \emph{explicit AND/OR graphs for trees} \cite{DECHTER200773}.

This is a directed acyclic graph (DAG) $G$ which explicitly represents an \emph{AND/OR state space}  for solving a certain problem by decomposing it into subproblems. The set of nodes (or states) $S:=V(G)$ contains OR and AND nodes (the OR nodes represent alternative ways for solving the problem while the AND nodes represent problem decomposition into subproblems, all of which need to be solved). There is a set of goal nodes $S_g\subseteq S$ and a set of start nodes $S_0\subseteq S$ representing respectively the terminal states and the initial states. The children (out-neighbors) of an OR node are AND nodes, and the children of an AND node are OR nodes or goal nodes. We say that a node is an \gor\ node when it is either an OR node or a goal node. Furthermore, the AND/OR graphs that we consider must have the property of being \emph{decomposable} (they can model a problem for which every decomposition yields disjoint subproblems that can be solved independently): for any AND node, the sets of nodes that are reachable from each one of its child nodes are pairwise disjoint. The example graph in \cref{fig:pic1} is decomposable. 

\begin{figure}[ht]
    \begin{center}
    \def\svgwidth{.6\textwidth}
    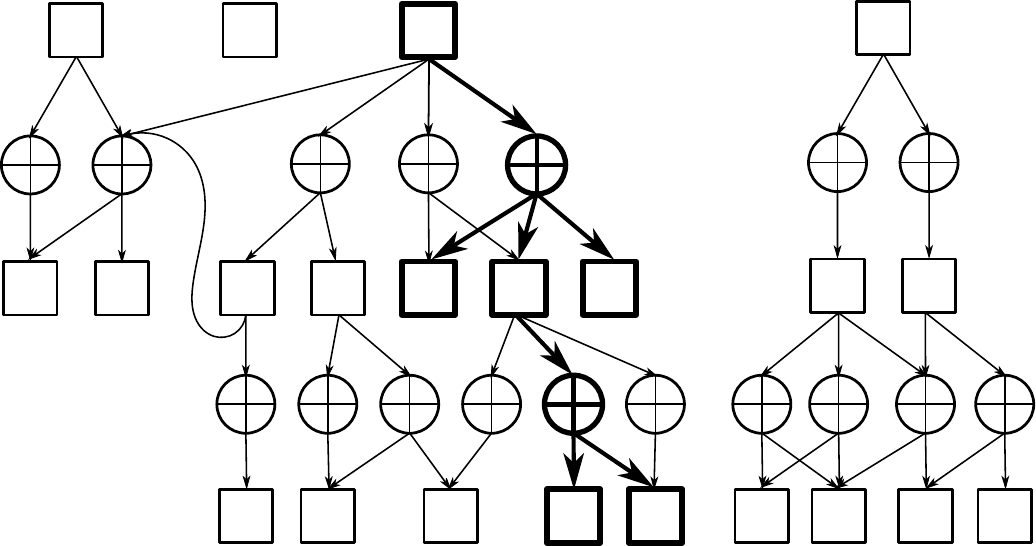
    \caption{An acyclic decomposable AND/OR graph with four start nodes. Squares are \gor\ nodes (OR nodes or goal nodes); crossed circles are AND nodes. One solution subtree of size $8$ is shown in bold. }\label{fig:pic1}
    \end{center}
\end{figure}

Formally, in this paper, any graph that 
satisfies the properties in \cref{defintion:AND-OR-graph} will be called an ad-AND/OR graph. Notice that this definition corresponds only to a particular case of the general AND/OR graphs in the AI literature; the latter may be neither acyclic nor decomposable. 

\begin{definition}[ad-AND/OR graph]\label{defintion:AND-OR-graph}
A directed graph $G$ is an \emph{acyclic decomposable AND/OR graph}, henceforth denoted by ad-AND/OR graph, if it satisfies the following:
\begin{itemize}
\item $G$ is a DAG.
\item $G$ is bipartite: its node set $V(G)$ can be partitioned into $(\mathcal{A}, \mathcal{O})$ so that all arcs of $G$ are between these two sets. Nodes in $\mathcal{A}$ are called \emph{AND nodes}; nodes in $\mathcal{O}$ are called \emph{\gor\ nodes}. 
\item Every AND node has in-degree at least one and out-degree at least one. The set of nodes with out-degree zero is then a subset of $\mathcal{O}$ and is called the set of \emph{goal nodes}; the remaining \gor\ nodes are simply the \emph{OR nodes}. The subset of OR nodes of in-degree zero is the set of \emph{start nodes}.
\item $G$ is decomposable: for any AND node, the sets of nodes that are reachable from each one of its child nodes are pairwise disjoint. 
\end{itemize}
\end{definition}

\begin{definition}[solution subtree]\label{definition:solution-subtree}
A \emph{solution subtree} $T$ of an ad-AND/OR graph $G$ is a subgraph of $G$ which: (1) contains exactly one start node; (2) for any OR node in $T$ it contains one of its child nodes in $G$, and for any AND node in $T$ it contains all its children in $G$. 
\end{definition} 
It is immediate to see that a solution subtree is indeed a subtree of $G$: it is a rooted tree, the root of which is a start node. If we would drop the requirement of $G$ being decomposable, the object defined in \cref{definition:solution-subtree} would not be guaranteed to be a tree.  One solution subtree of the example graph in \cref{fig:pic1} is shown in bold. 

The set of all solution subtrees of $G$ is denoted by $\mathbb{T}(G)$. 
Given an ad-AND/OR graph $G$, counting the number of its solution subtrees and enumerating all solution subtrees can be solved by folklore approaches based on depth-first search (DFS). 

Before going further, we recall that the motivation of this paper is concerned with solutions of dynamic programming problems. The correspondence between the solutions of DP-style recurrence equations and the solution subtrees of general AND/OR graphs has been formally proven in \cite{Gnesi}. 
In the case where the underlying graph is acyclic, the recurrence equations can be solved efficiently by DP-algorithms. 
While we will now concentrate on the solution subtrees of an ad-AND/OR graph and on the equivalence classes of solution subtrees, we will demonstrate in \cref{section:application} how to apply our algorithms to analyze equivalence classes of solutions of a very general class of problems solvable by DP.
It is important to point out that solution subtrees of general AND/OR graphs are equivalent to various other well-known formalisms, e.g., acceptance trees of a nondeterministic tree automata, languages of regular tree grammars, complete subcircuits of tropical circuits. The reader who is more familiar with those may also find this paper interesting even outside of a dynamic programming context. 

\subsection{Equivalence classes}
Let $G$ be an ad-AND/OR graph. Let $C$ be an \textbf{ordered} set of colors. We will consider equivalence relations on the set of solution subtrees of $G$ which are based on a local comparison of the colors of the \gor\ nodes. Intuitively, two \gor nodes having the same color represent two alternative ways of solving the problem that can be considered equivalent.

\begin{definition}[e-coloring]\label{def:colors}
An ad-AND/OR graph $G$ is \emph{e-colored} if its \gor\ nodes are colored in such a way that for any AND node all its children have distinct colors. 
\end{definition}

\subparagraph*{Notations}  If $s$ is a \gor node of $G$, we denote by $c(s)$ its color. If $s$ is an AND node, we denote by $\widetilde{C}(s)$ the tuple of colors of the children of $s$ sorted in increasing order of the colors.
If $T_1\in\mathbb{T}(G)$ is a solution subtree of $G$, we use the notation $\pi(T_1)$ for the result of contracting the AND nodes in $T_1$: for each OR node $s$ of $T$, contract the only child node of $s$ in $T$ (i.e., remove the child and connect $s$ to each one of its ``grandchildren''). 

\begin{definition}[equivalence class]\label{definition:equivalence-classes}
A node-colored rooted tree $T$ is an \emph{equivalence class} of solution subtrees of an e-colored ad-AND/OR graph $G$ (or shortly an equivalence class of $G$) if there exists a solution subtree $T_1$ of $G$  such that $\pi(T_1)$ is equal to $T$. Such a $T_1$ is said to be a solution subtree \emph{belonging to} the class $T$.
\end{definition}

\subparagraph*{More notations} We denote by $\mathbb{C}(G)$ the set of equivalence classes of $G$. The notation $\pi$ can be seen as a function $\pi\colon \mathbb{T}(G)\to \mathbb{C}(G)$. We denote by $\pi^{-1}(T):=\{T_1\in\mathbb{T}(G)\mid \pi(T_1)=T\}$ the subset of solution subtrees of $G$ belonging to the class $T$. The notations $c(s)$ and $\widetilde{C}(s)$ are naturally extended to the case where $s$ is a node in an equivalence class $T$. The root node of a rooted tree $T$ is denoted by $r(T)$. The set of the children of a node $s$ is denoted by $Ch(s)$. 
 
An example of an e-colored ad-AND/OR graph with five equivalence classes is given in \cref{fig:pic2}.

\begin{figure}[ht]
    \begin{center}
    \def\svgwidth{.7\textwidth}
    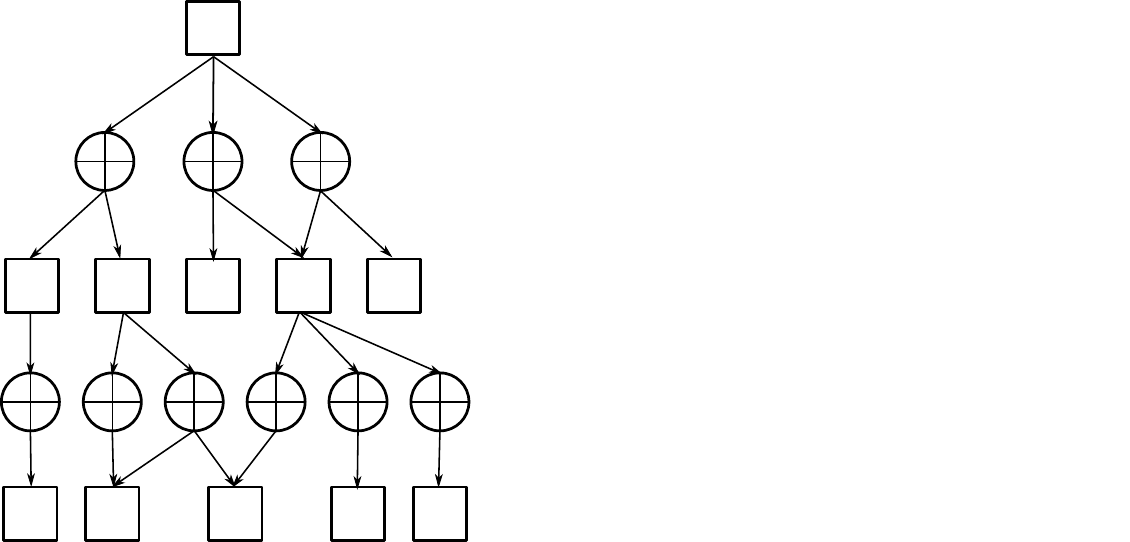
    \caption{An e-colored ad-AND/OR graph and its five equivalence classes. The colors of the \gor\ nodes are written inside the squares.}\label{fig:pic2}
    \end{center}
\end{figure}

\subsection{Enumerating equivalence classes}

Given an e-colored ad-AND/OR graph $G$, we propose a polynomial delay algorithm to enumerate all equivalence classes of $G$. Given a total ordering $c_1,\dots,c_m$ of the colors of $G$, we define a total ordering $\prec$ over $\mathbb{C}(G)$, the set of equivalence classes of $G$. If $T$ and $T'$ have their roots colored differently, we say that $T$ is smaller than $T'$, denoted by $T\prec T'$, if the root color of $T$ precedes the one of $T'$.
If $T$ and $T'$ have the same root color, let $(T_1,\dots,T_k)$ (resp. $(T'_1,\dots,T'_\ell)$) be the child subtrees of $r(T)$ (resp. $r(T')$) sorted recursively with respect to $\prec$. We then say that $T$ is smaller than $T'$ if the tuple $(T_1,\dots,T_k)$ is lexicographically smaller than $(T'_1,\dots,T'_\ell)$, i.e., if $T_i\prec T'_i$ with $i$ being the smallest index such that $T_i \neq T'_i$. We also assume that $\emptyset$ is smaller than any tree, and 
therefore a single node tree colored with color $c$ comes before any other tree whose root is colored with $c$ in $\prec$. 

\begin{algorithm}[!htb]
\DontPrintSemicolon
  \SetKw{Yield}{Return}
  \SetKw{Input}{Input:}
  \SetKw{Output}{Output:}
  \SetKw{GoTo}{Go To}
  \SetKwFunction{FNext}{Next}
  \SetKwFunction{FNewTree}{NewTree}
  \SetKwProg{Fn}{Function}{:}{}
  \Input A set $\cO$ of \gor  nodes having all the same color $c$ and an equivalence class $T$ of $G/\cO$\;
  \Output The equivalence class $T'$ of $G/\cO$ that follows $T$ w.r.t the $\prec$ ordering.

  \Fn{\FNext{$T$,\cO}}{
  
\If{$T=\emptyset$ and $\cO$ contains terminal nodes}
{
            \Yield A tree with a single root node colored with $c$\; 
}
  $r\gets 0$\;
\eIf{$T = \emptyset$ or $T$ is a single node tree }{\label{begining}
        $r\gets r+1$ \;
         \If{$r>|\cT(\cO)|$}
        {
          \Yield $\bot$ 
        }
        Let $(c_1,\dots,c_j)$ be the color tuple of $t_r \in \cT(\cO)$ and let $T_1,\dots,T_j\gets \emptyset$\;
        $\cO_1 \gets C^r_1$\;
        
        $\ell \gets 1$\;
  }
  {
        Let $r$ be such that $t_r \in \cT(\cO)$ is the root color tuple $\widetilde{C}(r(T))$\;
        
        Let $(T_1,\dots,T_j)$ be the child subtrees of $r(T)$, the 
        roots of which are colored respectively with $(c_1,\dots,c_j):=t_r$ \;

        For all $i\leq j$, let $\cO_i\subseteq C^r_i$  be the set of nodes in $C^r_i$ compatible with $(T_1,\dots,T_{i-1})$\;

        Let $\ell$ be the largest index $i\leq j$ such that \FNext{$T_i$,$\cO_i$} $\neq \bot$ if such index exists. Otherwise, $T\gets \emptyset$ and go to line \ref{begining}\;
  }

\BlankLine
  
        $T_\ell \gets$ \FNext{$T_\ell$,$\cO_\ell$}\;
        \For{$\ell < i \leq j$}
        {
            Let $\cO_i \subseteq C^r_i$  be the set of nodes in $C^r_i$ compatible with $(T_1,\dots,T_{i-1})$\;
            $T_i \gets$ \FNext{$\emptyset$,$\cO_i$}\;
        }
        \Yield A tree with root color $c$ and root child subtrees $(T_1,\dots,T_j)$\;
}

\caption{Next solution}

\label{alg:Newone}
\end{algorithm}

\subsubsection{Definitions and notations}

Recall that given an AND-node $x$, $\widetilde{C}(x)$ is the tuple of colors of the children of $x$ sorted in increasing order.
Given an  OR-node $o$, we denote by $\cT(o)$ the set of color tuples of its children, i.e., $\cT(o):=\{\widetilde{C}(x) : x\in Ch(o) \}$. In other words, a color tuple $(c_1,\dots,c_j)$ belongs to $\cT(o)$ if $o$ has an AND-child node whose children are colored with $(c_1,\dots,c_j)$.
If we consider an equivalence class $T$ of $\cS(G/\{o\})$ rooted at $o$, the tuples of $\cT(o)$ are precisely the possible colorings of the children of $r(T)$. Indeed, if the AND-child node $x\in Ch(o)$ is chosen in a solution subtree, then $\widetilde{C}(x)$ will be the colors of the children of $o$ in that solution.
Notice that several AND-children nodes of $o$ may have the same color tuple.

We extend this definition to a set $\cO$ of $\gor$ nodes with $\cT(\cO)=\bigcup_{o\in \cO} \cT(o)$. 
In the same way, $t=(c_1,\dots,c_j)$ is a color tuple of $\cT(\cO)$ if and only if there exists an equivalence class $T$ of $\cS(G/\cO)$ such that the children of $r(T)$ are colored with $(c_1,\dots,c_j)$. 
Given a set $\cO$ of $\gor$ nodes, we denote by $t_1,\dots,t_{|\cT(\cO)|}$ the different color tuples of $\cT(\cO)$ ordered lexicographically, and we denote by $Ch^{\ell}(\cO)$ the set of AND-nodes in $Ch(\cO)$ whose color tuple is $t_\ell$, i.e., $Ch^{\ell}(\cO)=\{x\in Ch(\cO): \widetilde{C}(x) = t_\ell\}$.
The sets $Ch^{1}(\cO),\dots,Ch^{|\cT(\cO)|}(\cO)$ form a partition of $Ch(\cO)$, each part corresponding to a color tuple $t_i \in \cT(\cO)$.

Finally, given a color tuple $t_\ell :=(c_1,\dots,c_j)\in \cT(\cO)$, for each $i\leq j$, by \cref{def:colors}, each node of $Ch^\ell(\cO)$ has exactly one child with color $c_i$. We denote by $C^{\ell}_i$ the set of children of $Ch^{\ell}(\cO)$ colored with  $c_i$, i.e., $C^{\ell}_i=\{o\in Ch(x):x\in Ch^\ell(\cO),\; c(o)=c_i \}$ (it is a set of ``grandchildren'' of $\cO$). 

In the left panel of \cref{fig:pic3}, an example graph is shown where each node is labeled by an integer. The colors are, in increasing order, $w$, $x$, $y$, and $z$. For $\cO=\{1,2\}$, the set $\cT(\cO)$ contains the three tuples $t_1=(w,y)$, $t_2=(x,y)$, and $t_3=(x,y,z)$. We have   $Ch^1(\cO)= \{6\}$, $Ch^2(\cO)=\{4,5\}$, and $Ch^3(\cO)=\{3\}$. 
    The sets $C^{\ell}_i$ are $C^{1}_1= \{12\}$, $C^{1}_2= \{11\}$, $C^{2}_1= \{9,10\}$, $C^{2}_2=\{8, 11\}$, $C^{3}_1=\{9\}$, $C^{3}_2=\{8\}$, and $C^{3}_2=\{7\}$. 

\begin{figure}
    \centering
    \def\svgwidth{.8\textwidth}
    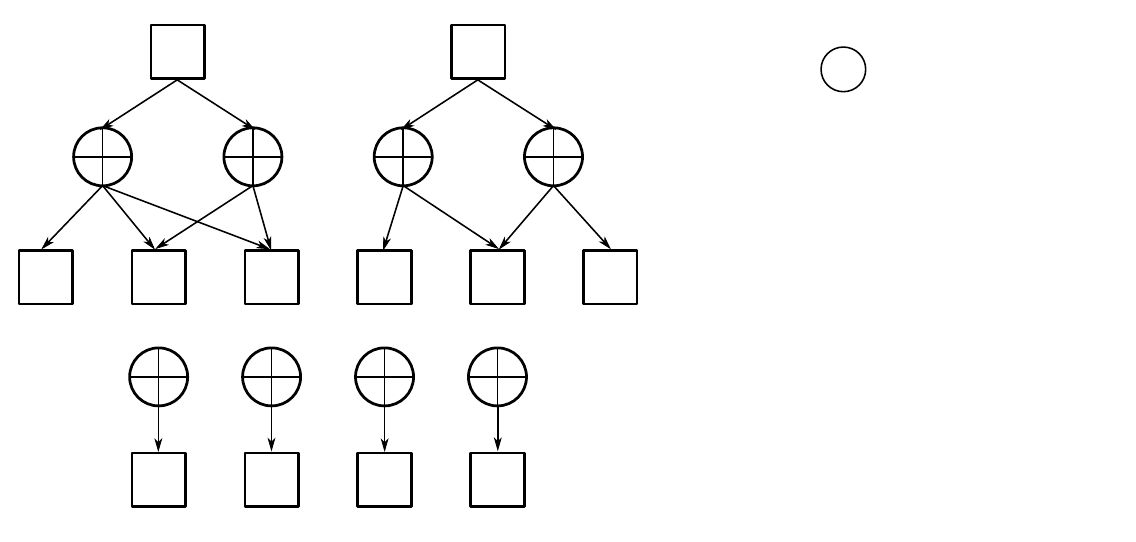
    \caption{Left panel: An e-colored ad-AND/OR graph. Right panel: For $\cO=\{1,2\}$, there are four combinations between $\cS(G/C_1^2)$ and $\cS(G/C_2^2)$; only two of them are admissible.}
    \label{fig:pic3}
\end{figure}

\subsubsection{Algorithm description}

Notice that by definition of $\prec$, given a set of \gor nodes $\cO$, all having the same color $c$, and a color tuple $t_\ell\in \cT(\cO)$, all equivalence classes $T$ of $G/\cO$ such that $\widetilde{C}(r(T))=t_\ell$ are consecutive with respect to $\prec$.

The algorithm outputs the equivalence classes in ascending order with respect to $\prec$. Given an equivalence class $T$ of $G/\cO$ for a set of \gor nodes $\cO$ of color $c$, it will output the equivalence class $T'$ of $G/\cO$ that succeeds $T$ w.r.t. $\prec$ if it exists or output the symbol $\bot$ if $T$ is the last solution. 

Assume that the children of $r(T)$ are colored with the \emph{root color tuple} $(c_1,\dots,c_j)=:t_r\in \cT(\cO)$ and let $(T_1,\dots,T_j)$ be the child subtrees of $T$, the roots of which are colored with the tuple $t_r$. Notice that for all $i\leq j$, $T_i$ is an equivalence class of $G/C^r_i$.
The algorithm will output the next equivalence class $T'$ such that $\widetilde{C}(r(T'))=t_r$ if there remains one (same color tuple at the root), or it will output the first solution such that $\widetilde{C}(r(T'))=t_{r+1}\in \cT(\cO)$ otherwise (the next color tuple at the root).

To find the next solution corresponding to the root color tuple $t_r$, the algorithm will replace recursively $T_j$ by its successor $T'_j$ w.r.t. $\prec$ if there exists one. We obtain the solution $T'$ whose subtrees are $(T_1,\dots,T_{j-1},T'_j)$ which is by definition the successor of $T$ in $\prec$ whenever $T'_j$ is the successor of $T_j$. If $T_j$ has no successor (that is, if it is the last one), we replace if possible $T_{j-1}$ by its successor $T'_{j-1}$ and we replace $T_j$ by the smallest admissible solution (i.e., the successor of $\emptyset$). In general, we select at each step the greatest index $\ell$ such that $T_\ell$ has a successor w.r.t. $\prec$, we replace it by its successor $T'_\ell$ and we take the smallest admissible solution for every $\ell <i\leq j$. 

Without further care, the above described procedure would output solutions whose child subtrees $(T_1,\dots,T_j)$ of the root correspond to the elements of the Cartesian product of $\cS(G/C^r_i)$, $i\leq j$. However, while it is true that if $T$ is a solution, its child subtree $T_i$ is an equivalence class of $G/C^r_i$ for all $i\leq j$, the converse is not true. Indeed, not all elements of $\cS(G/C^r_1)\times\dots\times\cS(G/C^r_j)$ 
lead to an admissible solution 
(an example is given in the right panel of \cref{fig:pic3}).
In order to find an admissible solution, we should guarantee that the choice of a given $T_i$ is \emph{compatible} with the previous choices $(T_1,\dots,T_{i-1})$. This is done by selecting the subset of \gor nodes $\cO_i \subseteq C^r_i$ that are compatible with $(T_1,\dots,T_{i-1})$ (see \cref{defn:compatible} below). An admissible choice 
of  $T_i$ will then be any equivalence class of $G/\cO_i$. The two key properties are that the set $\cO_i$ can be easily computed, and that it is never empty, i.e., there is always a choice for $T_i$ that is compatible with the previous choices of $(T_1,\dots,T_{i-1})$ (there is at least one choice that corresponds to the current solution). Notice that if the latter were not true, the algorithm would not have a polynomial delay complexity since we may spend exponential time without reaching a final solution. With this property, we are guaranteed that we can always extend a partial tuple $(T_1,\dots,T_{i})$ until we reach a complete tuple $(T_1,\dots,T_{j})$ that will form a solution.

\paragraph*{Compatible nodes}

Given a set of \gor nodes $\cO$ all colored with the same color $c$ and a tree $T$ of 
$\cS(G/\cO)$, we denote by $r(\pi^{-1}(T)):=\{r(S):S\in \pi^{-1}(T)\}$ the subset of \gor nodes of $\cO$, each one of which is the root of a solution subtree of class $T$. The following definition formalizes the notion of compatible nodes mentioned previously.  

\begin{definition}\label{defn:compatible}
Let $\cO$ be a set of \gor nodes of color $c$, $t_r=:(c_1,\dots,c_j)\in \cT(\cO)$, and let $T_1,\dots,T_k$, with $k<j$, be respectively 
equivalence classes of $\cS(G/C^r_i)$ for all $i\leq k$. We say that a node $o\in C^r_{k+1}$ is \emph{compatible} with $(T_1,\dots,T_k)$ if there exists an AND-node $x\in Ch^r(\cO)$ such that $o$ is a child of $x$ and such that 
$r(\pi^{-1}(T_i))$ contains a child of $x$ for all $i\leq k$. 
\end{definition}

\subsubsection{Analysis}

\begin{lemma}\label{lemma:compatible_sets}
Let $\cO$ be a set of \gor nodes of color $c$, $t_r=:(c_1,\dots,c_j)\in \cT(\cO)$. Let $T\in \cS(G/\cO)$, let $T_1,\dots,T_k$, $k<j$ be its first $k$ child subtrees with $T_i \in \cS(G/C^r_i)$ for all $i\leq k$ and let $\cO_{k+1}\subseteq C^r_{k+1}$ be the set of \gor nodes compatible with $(T_1,\dots,T_k)$. Given $T_{k+1} \in \cS(G/C^r_{k+1})$, there exists a tree  $T'\in \cS(G/\cO)$ whose first $k+1$ child subtrees are $(T_1,\dots,T_k,T_{k+1})$ if and only if $T_{k+1}\in \cS(G/\cO_{k+1})$. 
\end{lemma}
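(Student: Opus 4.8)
The plan is to read the equivalence directly off the canonical decomposition of the solution subtrees counted by $\cS(G/\cO)$. Before anything else I would fix the reading of the statement: I take $T'$ to be a tree \emph{whose root color tuple is exactly $t_r$}, which is the way \cref{alg:Newone} uses the lemma and the only reading under which the stated equivalence holds. If $T'$ were instead allowed to realize $(T_1,\dots,T_{k+1})$ through an AND-node carrying a color tuple other than $t_r$, then the left-hand side could hold while the right-hand side, which is defined through $Ch^r(\cO)$, fails.

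Next I would record once and for all the shape of a solution subtree $S$ of $G/\cO$ whose contraction has root color tuple $t_r$: its root $o=r(S)$ lies in $\cO$; in $S$ the node $o$ has a single AND-child $x$, and because the contraction produces the root color tuple $t_r$ we must have $x\in Ch^r(\cO)$; by \cref{def:colors} the children of $x$ are nodes $o_1,\dots,o_j$ with $o_i\in C^r_i$ (exactly one child per color $c_i$); and below each $o_i$ hangs a solution subtree $S_i$ rooted at $o_i$ with $\pi(S_i)$ equal to the $i$-th child subtree of the contraction. The point I would stress is that decomposability forces the $S_i$ to be pairwise node-disjoint, so this correspondence runs both ways: any independent choice of one $S_i$ per coordinate reassembles into a genuine solution subtree of $G/\cO$. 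Both directions of the proof are then instances of this correspondence.

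For the forward direction I would start from such a $T'$ and a witness $S\in\pi^{-1}(T')$, keeping its decomposition $x,o_1,\dots,o_j,S_1,\dots,S_j$. For $i\le k$ we have $\pi(S_i)=T_i$, hence $o_i\in r(\pi^{-1}(T_i))$; thus $x\in Ch^r(\cO)$ is an AND-node with child $o_{k+1}\in C^r_{k+1}$ and with a child $o_i\in r(\pi^{-1}(T_i))$ for every $i\le k$, which by \cref{defn:compatible} is exactly the statement $o_{k+1}\in\cO_{k+1}$. Since $S_{k+1}$ is rooted at $o_{k+1}$ and $\pi(S_{k+1})=T_{k+1}$, the class $T_{k+1}$ is realized by a solution subtree rooted in $\cO_{k+1}$, i.e. $T_{k+1}\in\cS(G/\cO_{k+1})$. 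For the converse I would take a witness $S_{k+1}$ of $T_{k+1}\in\cS(G/\cO_{k+1})$ rooted at some $o^\ast\in\cO_{k+1}$; \cref{defn:compatible} then hands me an AND-node $x\in Ch^r(\cO)$ having $o^\ast$ as its color-$c_{k+1}$ child and, for each $i\le k$, a child $o_i\in r(\pi^{-1}(T_i))$, from which I pick subtrees $S_i$ with $\pi(S_i)=T_i$. For the coordinates $k+1<i\le j$ I would attach \emph{any} solution subtree $S_i$ rooted at the color-$c_i$ child of $x$; such subtrees exist because a reverse-topological induction on $G$ shows every \gor-node roots a solution subtree (goal nodes are single-node subtrees, a non-goal OR-node uses any one of its AND-children, and an AND-node uses all its children once each has been completed). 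Rooting the assembly at any $o\in\cO$ with $x\in Ch(o)$ and invoking decomposability to see it is a tree, I obtain $S$ with $\pi(S)=:T'$ of root color tuple $t_r$ and first $k+1$ child subtrees $(T_1,\dots,T_k,T_{k+1})$.

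The genuinely load-bearing steps, and where I would be most careful, are the repeated appeal to decomposability — this is exactly what lets the ``if'' direction glue together independently chosen realizations without node collisions — and the auxiliary fact that every \gor-node roots some solution subtree, needed to fill the coordinates past $k+1$. I expect the only conceptual pitfall to be the interpretation of the root color tuple of $T'$, which is why I would pin it down at the very start.
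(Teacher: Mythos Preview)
Your proof is correct and follows essentially the same route as the paper's: both directions proceed by decomposing a witnessing solution subtree at its unique AND-child $x\in Ch^r(\cO)$, reading off the per-coordinate roots $o_i$, and matching them against \cref{defn:compatible}. Your write-up is in fact more careful than the paper's on two points: you pin down that $T'$ must have root color tuple $t_r$ (the paper asserts $\widetilde{C}(x)=t_r$ from the hypothesis ``$T_i\in\cS(G/C^r_i)$ for $i\le k$'' without really justifying why no other tuple sharing those first colors could occur), and you make explicit both the role of decomposability in the gluing step and the existence of filler subtrees for coordinates $i>k+1$, which the paper hides behind the phrase ``consider any solution subtree $C$ \dots''.
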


\begin{proof}
$(\Rightarrow)$ Assume that there exists a tree  $T'\in \cS(G/\cO)$ whose first $k+1$ child subtrees are  $(T_1,\dots,T_k,T_{k+1})$, and let $C$ be a solution subtree of $G/\cO$ such that $\pi(C)=T'$. Let $x$ be the AND-child node of the root of $C$. Notice that since we assumed that $T_i\in C^r_i$ for all $i\leq k$, $\widetilde{C}(x)=t_r$, and so $x\in Ch^r(\cO)$. Let $(o_1,\dots,o_j)$ be the children of $x$. Since for all $i\leq k$ we have $o_i \in r(\pi^{-1}(T_i)$, the node $o_{k+1}$ is compatible with $(T_1,\dots,T_k)$. Thus $T_{k+1} \in \cS(G/\cO_{k+1})$ since $T_{k+1}\in \cS(G/\{o_{k+1}\})$, and $o_{k+1} \in \cO_{k+1}$.

$(\Leftarrow)$ Assume now that $T_{k+1}\in \cS(G/\cO_{k+1})$. There exists $o_{k+1} \in \cO_{k+1}$ and a solution subtree $C_{k+1}$ of $G/\{o_{k+1}\}$ rooted at $o_{k+1}$ with $\pi(C_{k+1})=T_{k+1}$. Since $o_{k+1}\in \cO_{k+1}$, there exists an AND-child node $x$ of a node $o$ in $\cO$ such that $\widetilde{C}(x)=t_r$, and $o_i \in r(\pi^{-1}(T_i))$ for all $i\leq k$ where $o_i$ is the unique child of $x$ of color $c_i$. Therefore, for all $i\leq k$, there exists a solution subtree $C_i$ of $G/\{o_i\}$ such that $\pi(C_i)=T_i$. 
Now consider any solution subtree $C$ of $G/\cO$ rooted at $o$, with $o$ having $x$ as AND-child node and with $x$ having its first $k+1$ child solution subtrees equal to $(C_1,\dots,C_{k+1})$.
Then the first $k+1$ child subtrees of $\pi(C)$ will be $(T_1,\dots,T_k,T_{k+1})$.      
\end{proof}

\begin{proposition}\label{lemma:correctness}
Let $T$ be an equivalence class of $G/\cO$ for a set $\cO$ of \gor nodes of $G$, all colored with the same color $c$. Then,  the function \texttt{Next} of \cref{alg:Newone} is such that:
\begin{enumerate}
    \item \texttt{Next}($\emptyset,\cO$) 
    returns the smallest equivalence class of $G/\cO$ w.r.t. $\prec$.
    \item \texttt{Next}($T,\cO$) 
    returns the equivalence class of $G/\cO$ that 
    follows $T$ w.r.t. $\prec$.
    \item if $T$ is the last equivalence  class of $G/\cO$,
    \texttt{Next}($T,\cO$) 
    returns $\bot$.
\end{enumerate}
\end{proposition}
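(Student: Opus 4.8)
The plan is to argue by strong induction on the length of a longest directed path from a node of $\cO$ to a goal node, equivalently on the depth of the part of $G$ reachable from $\cO$. Since $G$ is a DAG and every recursive call \texttt{Next}$(\cdot,\cO_i)$ in \cref{alg:Newone} is made on a set $\cO_i\subseteq C^r_i$ consisting of grandchildren of $\cO$, this measure strictly decreases, the recursion is well-founded, and the induction hypothesis applies to all such calls. In the base case $\cT(\cO)=\emptyset$ every node of $\cO$ is a goal node, the unique class is the single node colored $c$, and one checks directly that \texttt{Next}$(\emptyset,\cO)$ returns it (claim~1) while \texttt{Next}(single node$,\cO)$ sets $r=1>0=|\cT(\cO)|$ and returns $\bot$ (claim~3), with claim~2 vacuous. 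All three statements are then proved simultaneously in the inductive step.

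For claim~1 in the inductive step, if $\cO$ contains a goal node the first conditional returns the single node colored $c$, which by definition of $\prec$ is the $\prec$-minimum among all classes rooted at $c$. Otherwise the algorithm sets $r=1$, fixes the lexicographically smallest tuple $t_1\in\cT(\cO)$ as root tuple, and fills the children left to right by $T_i\gets$\texttt{Next}$(\emptyset,\cO_i)$, where $\cO_i$ is the set of nodes of $C^1_i$ compatible (\cref{defn:compatible}) with $(T_1,\dots,T_{i-1})$. By the induction hypothesis each call returns the $\prec$-smallest class of $G/\cO_i$, and by \cref{lemma:compatible_sets} every $\cO_i$ is non-empty (the already built prefix is realized by some solution subtree, which necessarily has an $i$-th child), so the construction never blocks and yields the $\prec$-smallest class with root tuple $t_1$. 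Using the noted fact that classes sharing a root color tuple form a consecutive block in $\prec$, together with minimality of $t_1$, this class is globally $\prec$-minimal.

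For claims~2 and~3 I would treat $T$ with root tuple $t_r$ and children $(T_1,\dots,T_j)$. By \cref{lemma:compatible_sets} each $T_i$ lies in $\cS(G/\cO_i)$ for the compatible set $\cO_i$ determined by the earlier children, and conversely every choice $T_i\in\cS(G/\cO_i)$ extends the prefix to a genuine class; hence the $t_r$-classes are exactly this ``compatible product'', linearly ordered by $\prec$, which on this block is the lexicographic order on $(T_1,\dots,T_j)$. I would then verify that the algorithm realizes the standard odometer increment on this product: it selects the largest index $\ell$ with \texttt{Next}$(T_\ell,\cO_\ell)\neq\bot$, i.e.\ for which $T_\ell$ is not the $\prec$-last class of $G/\cO_\ell$, replaces $T_\ell$ by its successor (induction hypothesis, claim~2), and resets every later child to the $\prec$-smallest compatible class via \texttt{Next}$(\emptyset,\cO_i)$ after recomputing $\cO_i$ from the updated prefix (induction hypothesis, claim~1). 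If no such $\ell$ exists, $T$ is the $\prec$-last $t_r$-class; the algorithm sets $T\gets\emptyset$, returns to line~\ref{begining}, increments $r$, and either builds the $\prec$-smallest $t_{r+1}$-class exactly as in claim~1, or returns $\bot$ when $r>|\cT(\cO)|$. By consecutivity of root-tuple blocks the smallest $t_{r+1}$-class is the immediate successor of the largest $t_r$-class, and when $t_r$ is the last tuple this shows $T$ is $\prec$-last, giving claim~3.

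The hard part will be the interaction between compatibility and the odometer step, which is precisely what \cref{lemma:compatible_sets} is designed to resolve, and two points need care. First, after replacing $T_\ell$ by its successor $T'_\ell\in\cS(G/\cO_\ell)$, the prefix $(T_1,\dots,T_{\ell-1},T'_\ell)$ must remain completable: by \cref{lemma:compatible_sets} it is realized by a solution subtree, so each recomputed set $\cO_{\ell+1},\dots,\cO_j$ is non-empty and the calls \texttt{Next}$(\emptyset,\cO_i)$ succeed; this is the non-emptiness property underlying the polynomial delay, since it forbids dead-end branches. Second, one must show the output $T'$ is the \emph{immediate} successor inside the $t_r$-block, i.e.\ that no $t_r$-class lies strictly between $T$ and $T'$: this holds because the children at indices $>\ell$ were already $\prec$-maximal in their (old) compatible sets, index $\ell$ advances by a single $\prec$-step, and the suffix is minimized, which is exactly the lexicographic-successor characterization. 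Combining the intra-block increment with the inter-block transition, and invoking consecutivity once more to exclude classes of other root tuples from lying between $T$ and $T'$, completes the argument.
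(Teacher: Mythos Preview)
Your proposal is correct and follows essentially the same approach as the paper's proof: strong induction on the height of $G/\cO$, the same base case where $\cO$ consists of goal nodes, and the same inductive argument leaning on \cref{lemma:compatible_sets} to justify both non-emptiness of the compatible sets and the lexicographic/odometer successor characterization within a fixed root-tuple block, followed by the transition to the next tuple $t_{r+1}$. The only stylistic difference is that the paper argues claim~2 by first fixing the successor $T'$ and verifying the algorithm outputs it (showing the algorithm's $\ell$ coincides with the smallest index where $T$ and $T'$ differ), whereas you argue the converse direction directly; the content is the same.
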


\begin{proof}
Let us define the height of $G/\cO$, 
$h(G/\cO)$  to be the maximum height of an equivalence class of $G/\cO$, i.e., the number of \gor nodes in a longest path from $\cO$ to a goal node minus $1$. The proof will be done by induction on $h(G/\cO)$.

Assume first that $h(G/\cO)=0$, i.e., $\cO$ contains only goal nodes. Then $G/\cO$ has only one equivalence class $T$, which is the single node tree of color $c$. The call of \texttt{Next}($\emptyset,\cO$) will output it in Line~5 of the algorithm, and the call of \texttt{Next}($T,\cO$) will return  $\bot$ in 
Line 11 since $|\cT(\cO)|=0$.

Assume now that $h(G/\cO)>0$.

\subparagraph*{Proof of \ref{lemma:correctness}.1} If $\cO$ contains goal nodes, then the smallest equivalence class of $G/\cO$ with respect to $\prec$ is the single node tree colored with $c$ and \texttt{Next}($\emptyset,\cO$) outputs it in Line 5. 
Otherwise, let $T$ be the smallest equivalence class of $G/\cO$ and let  $(T'_1,\dots,T'_j)$ be the subtrees of $T$ rooted at the children of $r(T)$.
By definition of $\prec$, the roots of $(T'_1,\dots,T'_j)$ are colored with the minimum color tuple $t_1:=(c_1,\dots,c_j)$ of $\cT(\cO)$ and by 
\cref{lemma:compatible_sets}, for all $i\leq j$ $T'_i$ is the smallest equivalence class of $G/\cO_i$ where $\cO_i \subseteq C_i^1$ is the set of nodes of $C_i^1$ compatible with $(T'_1,\dots,T'_{i-1})$.  Thus, \texttt{Next}($\emptyset,\cO$) will return $T$ in Line~27 since $r$ will receive $1$ in Line~9, $T_1$ will receive \texttt{Next}($\emptyset,C_1^1)$
in Line~22 which is equal to $T'_1$ by 
the induction hypothesis, and for all $i\leq j$, $T_i$ will receive \texttt{Next}($\emptyset,\cO_i$) in Line~25 which is equal to $T'_i$ by 
the induction hypothesis.

\subparagraph*{Proof of \ref{lemma:correctness}.2}
Let $(T'_1,\dots,T'_j)$ be the child subtrees of $r(T)$ and let $t_r:=(c_1,\dots,c_j)$ be the color tuple of $\cT(\cO)$ with which their roots are colored. Let $T'$ be the equivalence class of $G/\cO$ that follows $T$ with respect to $\prec$. Notice that the children of $r(T')$ are either colored with $t_r$ or with $t_{r +1}$ if $T$ is the largest equivalence class whose root children are colored with $t_r$.

Assume first that the children of $r(T')$ are colored with the color tuple $t_r$. Let  $(T''_1,\dots,T''_j)$ be the child subtrees of 
$r(T')$ and let $\cO_i \subseteq C_i^r$ be the set of nodes of $C_i^r$ compatible with $(T'_1,\dots,T'_{i-1})$ for all $i\leq j$. Let $\ell$ be the smallest index such that $T'_\ell \neq T''_\ell$.   We claim that $\ell$ is also the largest index such that $T'_\ell$ has a successor in $\cS(G/\cO_\ell)$ with respect to 
$\prec$, and thus that it corresponds to the $\ell$ chosen by the algorithm in Line 20. 
Indeed, assume that there exist $i<k\leq j$ and $F\in \cS(G/\cO_k)$ such that $T'_k\prec F$. By Lemma \ref{lemma:compatible_sets}, there exists an equivalence class of $G/\cO$ whose first $k$ child subtrees would be $(T'_1,\dots,T'_{k-1},F)$. 
However, in this case such an equivalence class would be greater than $T$ and smaller than $T'$ with respect to $\prec$, 
and it would be in contradiction with the fact that $T'$ immediately follows $T$ in $\prec$.
Now $T_\ell$ will receive  \texttt{Next}($T'_\ell,\cO_\ell$) 
in Line 22 which is by the induction hypothesis the tree of $\cS(G/\cO_\ell)$ that follows $T'_\ell$. Since we assumed that $\ell$ is the smallest index such that $T'_\ell \neq T''_\ell $, by definition of $\prec$ and by \cref{lemma:compatible_sets}, $T''_\ell$ is the equivalence class of $G/\cO_\ell$ that follows $T'_\ell$ in $\prec$, and so $T_\ell$ will receive $T''_\ell$ in Line 22. 
Since for all $i\leq \ell$ $T'_i = T''_i$, and since $(T_1,\dots,T_{\ell-1})$ are not modified by the algorithm, at the end of it, $(T_1,\dots,T_{\ell-1},T_\ell)$ will be equal to $(T'_1,\dots,T'_{\ell-1},T''_\ell)=(T''_1,\dots,T''_{\ell-1},T''_\ell)$. 
It now remains to show that $T_i$ will be equal to $T''_i$ for all $\ell< i \leq j$. 
Again, by \cref{lemma:compatible_sets}, for all $\ell<i\leq j$, $T''_i$ is the smallest tree of $\cS(G/\cO'_i)$ where $\cO'_i$ is the set of nodes of $C^r_i$ compatible with $(T''_1,\dots,T''_{i-1})$ since
otherwise, another tree of $\cS(G/\cO)$  greater than $T$ and smaller than $T'$ could be built. Thus, in Line~25, $T_i$ will receive \texttt{Next}($\emptyset,\cO'_i$) which is equal to $T''_i$ by the induction hypothesis, and $T'$ will be returned in Line~27.

Assume now that the children of $r(T')$ are colored with the color tuple $t_{r+1}$. In this case, $T$ is the greatest tree of $\cS(G/\cO)$ with respect to $\prec$ whose root children are colored with $t_r$.  Let $\cO_i \subseteq C_i^r$ be the set of nodes of $C_i^r$ compatible with $(T'_1,\dots,T'_{i-1})$ for all $i\leq j$. We claim that $T'_{i}$ is the greatest tree of $\cS(G/\cO_i)$ for all $i\leq j$ with respect to $\prec$. Indeed, assume otherwise that there exists $T''_i\in \cS(G/\cO_i)$ with $T'_i \prec T''_i$. Then by \cref{lemma:compatible_sets} $\cS(G/\cO)$ would contain a tree whose root children would be colored with $t_r$ which would be larger than $T$ with respect to $\prec$. By the induction hypotheses, \texttt{Next}($T'_i,\cO_i$) will therefore return $\bot$ for all $i\leq j$. So $T$ will receive $\emptyset$ in Line 20 and the algorithm will jump to Line~8, and the next color tuple $t_{r+1}$ will be selected in Line~9. Using now similar arguments to the ones used in the proof of \ref{lemma:correctness}.1, the smallest tree with respect to $\prec$ whose root children are colored with $t_{r+1}$ will be returned. 

\subparagraph*{Proof of \ref{lemma:correctness}.3}
Assume now that $T$ is the last equivalence class of $G/\cO$. Notice that in this case, the root children of $T$ are colored with $t_r$ where $r=|\cT(\cO)|$. As previously, $T'_{i}$ is the greatest tree of $\cS(G/\cO_i)$ for all $i\leq j$ with respect to $\prec$ since otherwise a greater tree of $\cS(G/\cO)$ would exist. By the induction hypothesis, \texttt{Next}($T'_i,\cO_i$) would return $\bot$ for all $i\leq j$. Therefore $T$ will receive $\emptyset$ in Line~20 and the algorithm will return to Line~8. Since $r=|\cT(\cO)|$, $r$ will receive $r+1$ in Line~9 and $\bot$ will be returned in Line~11.

\end{proof}

\begin{theorem}
Given an e-colored ad-AND/OR graph $G$, the set $\cS(G)$ can be enumerated with delay $O(n\cdot s)$ where $n$ is the number of
nodes of $G$ and $s$ is the maximum size of a solution. 
\end{theorem}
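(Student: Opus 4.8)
The plan is to turn the local \texttt{Next} routine of \cref{alg:Newone} into a global enumeration of $\cS(G)$ and then bound the delay by the cost of a single top-level call. First I would group the start nodes by color: for a color $c$, let $\cO^{(c)}$ be the set of start nodes colored $c$. Since every equivalence class $T\in\cS(G)$ is of the form $\pi(T_1)$ for a solution subtree $T_1$ whose root is a start node, and since $r(T)$ inherits the color of that start node, the classes whose root has color $c$ are exactly the elements of $\cS(G/\cO^{(c)})$. Hence, iterating over the colors in increasing order and, for each color $c$, calling \texttt{Next}$(\emptyset,\cO^{(c)})$ and then repeatedly calling \texttt{Next}$(T,\cO^{(c)})$ on the last returned class $T$ until $\bot$ is produced, enumerates every class of $G$ exactly once: completeness and the absence of repetitions follow from \cref{lemma:correctness}, since for each fixed $\cO^{(c)}$ the successive calls list $\cS(G/\cO^{(c)})$ in strictly increasing $\prec$-order, and distinct root colors give distinct classes across groups.

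It remains to bound the time between two consecutive outputs, which is, up to the single color switch contributing one extra call, the running time of one top-level invocation \texttt{Next}$(T,\cO)$. I would analyse this in two parts: the number of recursive calls, and the local work done per call. For the count, observe that the recursion touches a node of the current tree $T$ in one of three ways: (i) inside a call that returns $\bot$, which by inspection of \cref{alg:Newone} recurses through \emph{all} of a subtree to certify that it is $\prec$-maximal; (ii) as a node on the single ``modification path'' from $r(T)$ down to the deepest child advanced to its successor; or (iii) when it is discarded and a fresh smallest subtree is grown in its place by a call \texttt{Next}$(\emptyset,\cdot)$. The $\bot$-certifications visit pairwise disjoint subtrees of the \emph{old} tree $T$, the rebuilds grow pairwise disjoint subtrees of the \emph{new} tree $T'$, and the modification path has length at most the height; since $|T|,|T'|\le s$, the total number of \texttt{Next} invocations in one top-level call is $O(s)$.

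For the local work, each invocation on a node set $\cO'$ must produce $\cT(\cO')$ together with the partition $Ch^1(\cO'),\dots,Ch^{|\cT(\cO')|}(\cO')$, the grandchild sets $C^\ell_i$, and—crucially—the compatible sets $\cO_i\subseteq C^r_i$ of \cref{defn:compatible}. I would implement the compatibility test incrementally: maintain the subset $A\subseteq Ch^r(\cO')$ of AND-children still consistent with the choices $(T_1,\dots,T_{i-1})$ already fixed, updating $A$ after each choice $T_i$ by keeping only those $x\in A$ whose color-$c_i$ child lies in $r(\pi^{-1}(T_i))$; the set $\cO_i$ is then read off as the color-$c_i$ children of the current $A$. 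Precomputing for every AND node a map from color to child, and having each recursive call return the realizer set $r(\pi^{-1}(\cdot))$ restricted to its argument, makes one such update cost $O(n)$, so every \texttt{Next} invocation runs in $O(n)$ time and one top-level call in $O(n\cdot s)$.

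The step I expect to be the main obstacle is exactly this last one: showing that the compatible sets are computable within the $O(n)$ local budget and, above all, that no \texttt{Next} call ever wastes time exploring a branch leading to no solution. Two structural facts make the argument go through. First, \cref{lemma:compatible_sets} guarantees that every element of $\cS(G/\cO_i)$ genuinely extends the partial tuple $(T_1,\dots,T_{i-1})$ to a class of $G/\cO'$, so the rebuild calls \texttt{Next}$(\emptyset,\cO_i)$ never need to backtrack. Second, $\cO_i$ is never empty and every \gor node roots at least one class (the latter following from the in/out-degree conditions of \cref{defintion:AND-OR-graph} by induction on the longest path to a goal node), and the current solution itself exhibits an AND node of $Ch^r(\cO')$ certifying compatibility. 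Consequently all $O(s)$ invocations perform useful work, each in $O(n)$ time, and the delay is $O(n\cdot s)$.
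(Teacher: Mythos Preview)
Your proposal is correct and follows essentially the same approach as the paper: partition the start nodes by color, iterate \texttt{Next} over each group using \cref{lemma:correctness} for correctness, and bound the delay as $O(s)$ recursive calls each doing $O(n)$ local work, with the compatible sets computed via the realizer sets $r(\pi^{-1}(\cdot))$ returned alongside each call. Your three-way accounting of the recursive calls (the $\bot$-certifications over disjoint subtrees of the old $T$, the modification path, and the rebuilds over disjoint subtrees of the new $T'$) is in fact more explicit than the paper's brief ``one recursive call per node of $T'$ not in $T$'', but both arrive at the same $O(s)$ bound.
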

\begin{proof}
To enumerate $\cS(G)$, we first split the 
start nodes of $G$ into sets $S_0,\dots,S_k$ according to their colors. For each set $S_i$, starting with $T=\emptyset$, we repeatedly assign \texttt{Next}($T,S_i$) to  $T$ and output it until $T=\bot$. By \cref{lemma:correctness}, 
this guarantees that we output every solution of $\cS(G/S_i)$ exactly once. Since any solution of $\cS(G)$ belongs to $\cS(G/S_i)$ for a given $i\leq k$, every solution of $\cS(G)$ will be outputted exactly once.

For the complexity, notice that at most one recursive call is performed by the node of the next solution. More precisely, if \texttt{Next}($T,\cO$)$=T'$, 
there will be exactly one recursive call per node in $T'$ that is not in $T$, and thus at most $s$ recursive calls will be performed. 

In each recursive call, both the set $\cT(\cO)$ and the partition $\{C^r_i\}_{i\leq j}$ of grandchildren of $\cO$ can be computed in $O(n)$
time. It remains to show that the sets of compatible nodes $\cO_i$, $i\leq j$, can be computed in $O(n)$ 
time in total which will conclude the proof. 
To do this, we should be able to compute the sets $r(\pi^{-1}(T_i))$ for all $i\leq j$. If \texttt{Next}($T,\cO$)$=T'$, the easiest way is to return the set $r(\pi^{-1}(T'))$ together with $T'$ when the call \texttt{Next}($T,\cO$) returns. This could be done by observing that if $T\in\cS(G/\cO)$ where $\cO$ is a set of goal nodes all having the same color $c$, then  $r(\pi^{-1}(T))=\cO$, and if $T$ has child subtrees $T_1,\dots,T_j$ then $r(\pi^{-1}(T))$ is the set of nodes of $\cO$ that has at least an AND-child $x$ such that the children of $x$ contain exactly one node in $r(\pi^{-1}(T_i))$ for each $i\leq j$, which can be found in $O(n)$ time. Thus only $O(n)$ time is necessary at each recursive call to return $r(\pi^{-1}(T'))$ in addition to $T'$. 
\end{proof}

\subsection{Restricting the graph to an equivalence class}

After the enumeration of the equivalence classes, it might be interesting to go back to the solutions in each class, in particular, one might want to use the number of solutions as a measure for the ``importance'' or ``significance'' of an equivalence class. We present an algorithm that, given an e-colored ad-AND/OR graph $G$ and an equivalence class $T$, constructs the \emph{subgraph $G^T$ of $G$ restricted to $T$}, that is, a subgraph of $G$ of which the solution subtrees are exactly the ones of $G$ belonging to the equivalence class $T$:  $\mathbb{T}(G^T)=\pi_G^{-1}(T):=\{T_1\in\mathbb{T}(G)\mid \pi(T_1)=T\}$. 
Once the graph $G^T$ is obtained, the following questions can be answered (by applying the same method as for the unrestricted ad-AND/OR graph $G$): counting the number of, and enumerating the solution subtrees belonging to the class $T$.

\cref{alg:2} relies on two recursive functions \texttt{VisitOR} and \texttt{VisitAND}, both taking as input a node in $G$ and a node in $T$. The \textbf{Require} statements are used to specify the preconditions that the two parameters of the two \texttt{Visit} functions must verify; it can be checked by inspection that these conditions are always satisfied whenever the functions are called. The algorithm performs an operation called \textsf{Mark} on the nodes in $G$. All nodes are initially unmarked; the \textsf{Mark} operation changes the state of a node into marked.

\begin{algorithm}[!htb]
\DontPrintSemicolon
  \SetKw{Yield}{yield}
  \SetKw{Require}{Require:}
  \SetKwFunction{FMain}{Main}
  \SetKwFunction{FVisitOR}{VisitOR}
  \SetKwFunction{FVisitAND}{VisitAND}
  \SetKwProg{Fn}{Function}{:}{}
  
  \KwData{an e-colored ad-AND/OR graph $G$, an equivalence class $T$}
  \KwResult{the graph $G^T$}
  \Fn{\FMain{$G$, $T$}}{
    \For{each start node $s_0$ of $G$ such that $c(s_0)=c(r(T))$}{
     \FVisitOR{$s_0$, $r(T)$}
    }
    \KwRet $G^T$ obtained from $G$ by removing all unmarked nodes
  }
\BlankLine

  \Fn{\FVisitOR{$s$, $v$}}{
  \Require $c(s)=c(v)$\;
    \If{$s$ is a goal node and $v$ is a leaf}{
        \textsf{Mark}($s$)\;
        \KwRet\;
    }
    \For{each child node $s_i$ of $s$ in $G$ such that $\widetilde{C}(s_i)=\widetilde{C}(v)$}
    {    \FVisitAND($s_i$, $v$)\;
    }
    \If{at least one child of $s$ is marked}{
        \textsf{Mark}($s$)\;
    }
  }
\BlankLine
  \Fn{\FVisitAND{$s$, $v$}}{
    \Require $\widetilde{C}(s)=\widetilde{C}(v)$\;
    \For {each child node $s_i$ of $s$} 
    {
        $v_i\gets$ the unique child of $v$ such that $c(v_i)=c(s_i)$\;
        \FVisitOR($s_i$, $v_i$)
    }
    \If{all children of $s$ are marked}{
        \textsf{Mark}($s$)\;
    }
  }

\caption{Restricting the graph to an equivalence class}
\label{alg:2}
\end{algorithm}

Recall that $\pi(T_1)$ transforms a solution subtree $T_1\in\mathbb{T}(G)$ into an equivalence class $T\in\mathbb{C}(G)$ by contracting the AND nodes in $T_1$. For a fixed $T_1$, we extend this notation and write $\pi(s)=\pi(s_1):=v$ for every OR node $s$ in $T_1$ with its unique AND-child node $s_1$ in $T_1$ that are identified with the node $v$ in $T$ under the transformation.
 
\begin{lemma}\label{lem:2}
Let $T_1$ be a solution subtree of $G$ belonging to the class $T$. For every node $s$ of $T_1$, there is a call to either \emph{\texttt{VisitOR}} or 
to \emph{\texttt{VisitAND}} of \cref{alg:2} with parameters $s$ and $\pi(s)$, depending on whether $s$ is an \gor\ node or an AND node. 
\end{lemma}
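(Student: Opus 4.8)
The plan is to prove the statement by induction on the depth of the node $s$ in the rooted tree $T_1$, following the cascade of recursive calls triggered by \texttt{Main} from the root downward. Throughout I would lean on two facts: that $\pi$ leaves the color of every \gor\ node unchanged (it only contracts AND nodes), so that $c(s)=c(\pi(s))$ for a \gor\ node and $\widetilde{C}(s)=\widetilde{C}(\pi(s))$ for an AND node; and that, by the e-coloring property of \cref{def:colors}, the children of any AND node---equivalently the children of any internal node of the class $T$---have pairwise distinct colors.

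For the base case I would note that the root $s_0$ of $T_1$ is its unique start node and that $\pi(s_0)=r(T)$, whence $c(s_0)=c(r(T))$; therefore the loop of \texttt{Main} invokes \texttt{VisitOR}$(s_0,\pi(s_0))$. For the inductive step I would assume the asserted call has been made for a node $s$ of $T_1$ and show the children of $s$ in $T_1$ receive theirs, splitting into two cases. If $s$ is a \gor\ node, the call is \texttt{VisitOR}$(s,\pi(s))$; when $s$ is a goal node it is a leaf of $T_1$, $\pi(s)$ is a leaf of $T$, and the first branch simply marks $s$, so there is nothing further to check; otherwise $s$ is an OR node with a unique AND-child $s_1$ in $T_1$, and I would verify that $s_1$ passes the loop's filter so that \texttt{VisitAND}$(s_1,\pi(s_1))$ is called. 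If $s$ is an AND node, the call is \texttt{VisitAND}$(s,\pi(s))$, and since $T_1$ is a solution subtree it contains every child of $s$ in $G$, so the loop ranges exactly over the children of $s$ in $T_1$; for each such child $s_i$ I would identify the selected $v_i$ with $\pi(s_i)$ and conclude that \texttt{VisitOR}$(s_i,\pi(s_i))$ is called.

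The step I expect to require the most care is checking that the filtering conditions in the two loops never discard the particular child of $s$ that lies in $T_1$. Concretely, in \texttt{VisitOR} this amounts to the equality $\widetilde{C}(s_1)=\widetilde{C}(\pi(s))$ for the AND-child $s_1$, which holds because the children of $\pi(s)$ in $T$ are precisely the $\pi$-images of the children of $s_1$ and $\pi$ preserves their colors; and in \texttt{VisitAND} it amounts to the identification $v_i=\pi(s_i)$, which follows because $\pi(s_i)$ is a child of $\pi(s)$ of color $c(s_i)$ and, the children of $\pi(s)$ having pairwise distinct colors, it is the unique such child. Once these two matching facts are in place---and they also confirm the \textbf{Require} preconditions $c(s)=c(\pi(s))$ and $\widetilde{C}(s)=\widetilde{C}(\pi(s))$---the induction closes immediately and every node of $T_1$ is reached by the appropriate call.
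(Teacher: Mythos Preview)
Your proposal is correct and follows essentially the same top-down induction as the paper's proof, merely phrasing the step as ``from a visited node to its children'' rather than ``from a visited parent to the node.'' You are in fact more explicit than the paper about why the filter $\widetilde{C}(s_1)=\widetilde{C}(\pi(s))$ in \texttt{VisitOR} holds and why the selected $v_i$ in \texttt{VisitAND} coincides with $\pi(s_i)$ via the distinct-color property, which the paper's proof states without elaboration.
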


\begin{proof} By top-down induction. The start node of $T_1$ is visited in  Line~3 since it has the correct color. In the induction step we separate two cases. For an \gor\ node $s$ of $T_1$ that is not a start node, suppose that the parent $s_1$ of $s$ in $T_1$ is visited in a call \texttt{VisitAND}($s_1$, $\pi(s_1)$). Then $s$ is visited (Line~22), and the second parameter is $\pi(s)$. In the other case, for an AND node $s$ of $T_1$, suppose that the parent $s_1$ of $s$ in $T_1$ is visited in a call \texttt{VisitOR}($s_1$, $\pi(s_1)$). Since we have $\pi(s)=\pi(s_1)$ and $\widetilde{C}(\pi(s_1)) = \widetilde{C}(s)$, the
condition at Line~12 is satisfied and $s$ is visited in a call \texttt{VisitAND}($s$, $\pi(s)$).
\end{proof}

The correctness of \cref{alg:2} is shown in \cref{proposition:correctness2}. We omit the analysis of complexity as the algorithm clearly requires a running time that is linear in the size of the graph.
\begin{proposition}\label{proposition:correctness2}
The set of solution subtrees of $G^T$, the graph returned by \cref{alg:2}, is equal to $\pi_G^{-1}(T)$, i.e., the set of solution subtrees of $G$ belonging to the equivalence class $T$.
\end{proposition}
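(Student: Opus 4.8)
The plan is to prove the two inclusions $\mathbb{T}(G^T)\subseteq \pi_G^{-1}(T)$ and $\pi_G^{-1}(T)\subseteq \mathbb{T}(G^T)$ separately. For the second (easier) inclusion, I would take any solution subtree $T_1\in\pi_G^{-1}(T)$, i.e., $\pi(T_1)=T$, and show that every node of $T_1$ is marked by \cref{alg:2}; since $G^T$ is obtained by deleting only the unmarked nodes, this will give $T_1\subseteq G^T$, hence $T_1\in\mathbb{T}(G^T)$. The key tool here is \cref{lem:2}, which already guarantees that each node $s$ of $T_1$ is visited with the correct companion parameter $\pi(s)$. What remains is to upgrade ``visited'' to ``marked''. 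I would argue by bottom-up induction along $T_1$: a goal-node leaf $s$ of $T_1$ is marked at Line~9; for an AND node $s\in T_1$, \emph{all} its children lie in $T_1$ (by \cref{definition:solution-subtree}) and are marked by induction, so the test at Line~24 fires; for a non-leaf \gor\ node $s\in T_1$, its unique AND-child in $T_1$ is marked by induction and has the matching color tuple, so the loop at Line~11 reaches it and the test at Line~13 marks $s$.

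For the first inclusion, I would take an arbitrary $T_2\in\mathbb{T}(G^T)$ and show that $\pi(T_2)=T$, so that $T_2\in\pi_G^{-1}(T)$. Since $T_2$ is a solution subtree of $G^T$, every node of $T_2$ survived the deletion, i.e., is marked; moreover the start node of $T_2$ has color $c(r(T))$, because only start nodes of that color are entered at Line~3 and thus only those can be marked. The goal is to exhibit an isomorphism between $\pi(T_2)$ and $T$ respecting colors. I would set this up by defining, top-down, a map $\phi$ from the \gor\ nodes of $T_2$ to the nodes of $T$ that sends the start node of $T_2$ to $r(T)$ and is built recursively through the \texttt{VisitAND}/\texttt{VisitOR} correspondence: at an \gor\ node $s\in T_2$ with $\phi(s)=v$, its unique AND-child $s_1\in T_2$ was marked, so by inspecting the \textbf{Require}/test conditions one sees $\widetilde{C}(s_1)=\widetilde{C}(v)$; then the children $s_i$ of $s_1$ in $T_2$ are matched to the children $v_i$ of $v$ via $c(v_i)=c(s_i)$ (Line~21), which is well defined and bijective because $T$ is a colored tree whose siblings have distinct colors (inherited from the e-coloring, \cref{def:colors}) and $\widetilde{C}(s_1)=\widetilde{C}(v)$ means the child-color multisets coincide.

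The main obstacle, and the step I would spend the most care on, is verifying that the recursively constructed $\phi$ is genuinely a \emph{color-preserving tree isomorphism} onto all of $T$ and not merely an embedding into a subtree of $T$. Surjectivity is the delicate direction: I must argue that every node $v$ of $T$ is actually hit, i.e., that the marking procedure could not have stopped short and left some subtree of $T$ unrealized in $T_2$. For this I would observe that in a solution subtree of $G^T$ each OR node selects exactly one AND-child (Line~11 combined with the survival of that child in $G^T$), each AND node retains all its children (Line~24 marking requires all children marked), and the color-tuple matching forces the selected AND-child's children to realize exactly the color tuple $\widetilde{C}(v)$; proceeding down $T$ one level at a time, the definition of $\phi$ together with the equality of child-color sets forces a child of $s_1$ in $T_2$ to lie above every child $v_i$ of $v$, so the image of $\phi$ is a full copy of $T$. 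Finally, contracting the AND nodes of $T_2$ (the operation $\pi$) collapses each $s_1$ into its parent $s$, and since $\phi$ identifies $s$ with $v=\phi(s)$ while matching the grandchildren by color, the contracted tree $\pi(T_2)$ is exactly $T$, completing the argument.
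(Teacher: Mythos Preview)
Your proposal is correct and follows essentially the same two-inclusion strategy as the paper: the inclusion $\pi_G^{-1}(T)\subseteq\mathbb{T}(G^T)$ via \cref{lem:2} plus a bottom-up marking induction, and the reverse inclusion by tracing the second parameter $v$ of the \texttt{Visit} calls along the solution subtree. Your explicit top-down map $\phi$ and the attention you pay to surjectivity spell out in more detail what the paper compresses into a single sentence about ``the recursion tree of the Visit function calls during which the nodes in $T_1$ are marked,'' but the underlying argument is identical.
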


\begin{proof}
\emph{(First direction)} 
We show that any solution subtree of $G^T$ is also a solution subtree of $G$, and that it belongs to the equivalence class $T$. For every marked OR node, at least one child is marked (Line~15); for every marked AND node, all its children are marked (Line~24). A solution subtree of $G^T$ is thus a solution subtree of $G$. Let $T_1$ be a solution subtree of $G^T$, consider the recursion tree of the Visit function calls during which the nodes in $T_1$ are marked. By the preconditions of the Visit functions ($c(s) = c(v)$ and $\widetilde{C}(s) = \widetilde{C}(v)$), the tree $\pi(T_1)$ is equal to the tree formed by the colored nodes that are used as the second parameter $v$ in this recursion tree. The latter is simply equal to $T$ (we start with the root of $T$, then visit each child of the current node), so we have $\pi(T_1) = T$. Therefore, every solution subtree of $G^T$ belongs to the class $T$.

\emph{(Second direction)} 
Let $T_1$ be a solution subtree of $G$ such that $\pi(T_1)=T$, we show that every node in $T_1$ is marked by bottom-up induction. By \cref{lem:2}, any goal node $s$ in $T_1$ is visited in a call \texttt{VisitOR}($s$, $\pi(s)$) so $s$ is marked (Line~9) because $\pi(s)$ is necessarily a leaf. For the induction step we separate two cases. Let $s$ be an OR node in $T_1$, and suppose that all nodes in $T_1$ at a smaller height are marked. By the same lemma, $s$ is visited. Then $s$ is marked at Line~16, since exactly one child of $s$ is in $T_1$ and is thus marked. In the other case, let $s$ be an AND node in $T_1$ and suppose that all nodes in $T_1$ at a 
smaller height are marked. By the lemma, $s$ is visited. Then $s$ is marked at Line~25 because all children of $s$ are in $T_1$ and are thus marked. This completes the proof.
\end{proof}

\section{Application to dynamic programming}\label{section:application}

\subsection{A formalism for tree-sequential dynamic programming}
Since its introduction by Karp and Held \cite{Karp67}, \emph{monotone sequential decision processes} (mSDP) have been the classical model for problems solvable by dynamic programming (DP). This formalism is based on finite-state automata. The solutions of DP-problems are thus equivalent to languages of regular expressions, or to paths in directed graphs. It is known that Bellman's \emph{principle of optimality} \cite{bellman2013dynamic} also applies to problems for which the solutions are not sequential but \emph{tree-like} \cite{Bonzon1970NecessaryAS}. Various generalizations have been proposed to characterize broader classes of problems solvable by DP or DP-like techniques \cite{BOSI, Helman89}. In this paper, we consider a framework which is the immediate generalization of the mSDP model, i.e., generalizing  finite automata (regular expressions, paths in DAGs) to finite tree automata (regular tree grammars, solution trees of general AND/OR graphs). Further generalizations exist (from trees to graphs of treewidth $>1$); the collection of these methods is known as \emph{Non-serial dynamic programming} \cite{1972nonserial}. 

In this model, a tree-sequential problem can be specified by a finite  (bottom-up) tree automaton $A=(Q,\Sigma,\delta, q_0, Q_F)$, where $Q$ is a finite set of states, $\Sigma$ is a ranked alphabet, $\delta$ is a set of transition rules of the form $(q_1,\dots, q_n,a,q)$ where $q_1,\dots,q_n,q\in Q$ and $a\in\Sigma$, $q_0\in Q$ is the initial state, $Q_F\subseteq Q$ is a set of final states. The problem specification also includes a cost function. The set $L(A)$ of trees accepted by the tree automaton $A$ defines the set of feasible solutions. The minimization problem seeks to minimize the cost function over the set $L(A)$ of feasible solutions.

We will consider the simple case of a positive additive cost function that always equals zero in the initial state. An additive cost function can be defined via an incremental cost function $I\colon Q^*\times \Sigma\to \mathbb{R}$,  where $Q^*$ consists of tuples of states in $Q$ of the form $(q_1,\dots,q_n)$. $I(q_1,\dots,q_n,a)$ can be viewed as the cost of attaching $n$ child subtrees to a new root of symbol $a$. While it might seem restrictive to require an additive structure on the cost function, this simple case does cover many important problems admitting a DP-algorithm, for instance, Travelling Salesman \cite{BellmanTSP, HeldTSP}, Knapsack \cite{Kellerer2004}, or Levenshtein distance \cite{WagnerFischer}. 

In this case, the answer of the minimization problem can be shown to be equal to $\min_{q\in Q_F} D(q)$, where $D:Q\to\mathbb{R}_{\geq 0}$ is defined by the following recurrence equations:
\begin{equation}\label{equation:DP}
\begin{aligned}
&D(q_0)=0\,,\\
&\text{for}\;q\neq q_0\,,\quad D(q)=\min_{(q_1,\dots,q_n,a,q)\in \delta}\;\sum_{1\leq i\leq n} D(q_i)+I(q_1,\dots,q_n,a)\,.\end{aligned}
\end{equation}
A \emph{dynamic programming algorithm} for the minimization problem corresponds to an algorithm that computes $D$; the function $D$ is commonly called a \emph{dynamic programming table} (a DP-tabled, also called a DP-array, or a DP-matrix). Needless to say, such an algorithm does not exist in general for given arbitrary tree automata and cost functions \cite{IBARAKI197484}. 

Using an algebraic approach, Gnesi and Montanari \cite{Gnesi} have shown that solving the functional \cref{equation:DP} corresponds to finding the solution subtrees of a general AND/OR graph. An important special case in which DP-algorithms exist is when the underlying AND/OR graph is acyclic.

When a fixed tree is given as an input to the problem, the underlying AND/OR graph is acyclic and decomposable (that is, it is an ad-AND/OR graph). Such problems are hence naturally solvable by DP-algorithms. These algorithms are known in folklore under the name \emph{Dynamic programming on a tree}. Many graph-theoretical problems (e.g., maximum matching, longest path) can be solved optimally on trees by DP-algorithms. Numerous real-world applications also rely on DP-algorithms on trees; examples can be found, for instance, in Data Science \cite{rokach2008data}, Computer Vision \cite{Felzenszwalb,Veksler}, and Computational Biology \cite{Bansal12,Donati2015}.

\paragraph*{Explicit construction of the ad-AND/OR graph for DP on a fixed tree}
Due to its usefulness for the examples that we will develop next, in the case of DP on a fixed tree, an explicit construction of the ad-AND/OR graph from \cref{equation:DP} is described below.  The construction is done in two steps. In the first step, we build a graph in which every node retains an additional attribute, its  \emph{value}, and every \gor\ node is labeled by a state $q\in Q$. In the second step, we \emph{prune} the graph by removing nodes that do not yield optimal values.
\begin{enumerate}
\item For each $(q_0,a,q)\in \delta$, create a goal node of value $0$ labeled by $q$. Then, for each $q\neq q_0$ in post-order,
\begin{enumerate}[i.]
\item For each $(q_1,\dots,q_n,a,q)\in \delta$, create an AND node, connect it to the $n$ \gor\ nodes labeled by $q_1,\dots,q_n$. Its value is equal to the sum of the values of its children, plus $I(q_1,\dots,q_n,a)$.
\item Create a single OR node, connect it to every AND node created in the previous step. 
Its label is $q$, and its value is the minimum of the values of its children.
\end{enumerate}
\item For each $q\in Q_F$, remove the OR node labeled by $q$ unless its value is equal to $\min_{q\in Q_F} G(q)$. For each OR node $s$, remove the arc to its AND-child node $s_i$ if the value of $s_i$ is not equal to the value of $s$. Finally, remove recursively all AND nodes without incoming arcs.
\end{enumerate}

\subsection{Examples}
\subsubsection{Optimal tree coloring problem} 

\paragraph*{Description}
A prototypical problem that fits into the framework of DP on a tree is \textsc{Optimal tree coloring}, that is, finding an optimal node-coloring of the input tree. Many problems of practical interest reduce to \textsc{Optimal tree coloring}; three concrete examples are given later in this section.

If $T$ is the input (rooted, ordered) tree and $C$ is the set of colors, such a problem seeks a coloring $\phi\colon V(T)\to C$ that minimizes the cost function. There can be many constraints on the coloring function: some nodes of $T$ may be forced to have a certain color, the possible colors of a node may depend on the colors of its descendants. In our tree-sequential dynamic programming formalism, a tree automaton and a cost function are given as part of the input. The tree automaton defines the set $L(A)$ of feasible coloring functions satisfying all those constraints. A state $q$ can be interpreted as a colored subtree of $T$ with a particular root color; the unique initial state is an empty coloring and transitions into a colored leaf of $T$; a final state corresponds to a fully colored $T$ with a particular root color. A commonly used form of cost functions considers the (possibly weighted) sum over the edges of the tree of the cost of putting two colors on each end of an edge, that is, an incremental cost function $I$ of the form $I(q_1,\dots,q_n,a,p)=\sum_{1\leq i\leq n} p(a_i,a)$ where $a_i$ is the color of the root of the subtree in state $q_i$ and $p\colon C^2\to\mathbb{R}_{\geq 0}$ is a function that gives the cost of putting two colors at each end of an edge.

\paragraph*{Equivalence relations on the set of solutions}
A possible strategy to define equivalence classes on the solution space of the \textsc{Optimal tree coloring} problem is to consider some colors to be locally equivalent on a node. In practical applications (see the next section), the space of colors can be quite large. Even though the precise colors of each node are necessary for correctly computing the cost function, when the solutions are analyzed by a human expert, it can be desirable to omit the colors and just look at whether the color of a node belongs to some group of colors. Therefore, this kind of equivalence relations is natural in many situations. Our \cref{definition:equivalence-classes} of equivalence classes of an e-colored ad-AND/OR graph deals exactly with equivalence relations of this type.

Let $e$ be a function that maps a color $c$ to its ``color group'' $e(c)$. Two solutions of the \textsc{Optimal tree coloring} problem $\phi_1,\phi_2\colon V(T)\to C$ are said to be \emph{equivalent} if $\forall u\in V(T)$, $e(\phi_1(u))=e(\phi_2(u))$. Let $G$ be the ad-AND/OR graph associated with this instance. For each \gor\ node $s$ of $G$ labeled with the state $q$, where $q$ is interpreted as a colored subtree of $T$ with a particular root color $c$, color the node $s$ with $e(c)$. 
Then $G$ is e-colored and $\mathbb{C}(G)$ corresponds to the set of equivalence classes of the solutions of the instance. Notice that the constraint we had on the e-coloring of an ad-AND/OR graph is naturally satisfied by any meaningful function $e$ because in a DP setting we only consider ordered trees: the $i$-th and $j$-th children of a node of $T$ cannot be in the same color group unless $i=j$.

\paragraph*{Concrete examples of tree coloring problems}

\subparagraph*{Example 1} 
The first example is related to the alignment of gene sequences on a phylogenetic tree \cite{Sankoff75}. The input is a tree $T$, a set $\Sigma$ of letters (DNA alphabet or protein alphabet), a function that labels each leaf node of $T$ with a letter, and a distance function $d\colon\Sigma^2\to \mathbb{R}_{\geq 0}$ between two letters. The goal is to extend the leaf labeling to a full labeling $\phi\colon V(T)\to \Sigma$ such that the sum of the distances over the edges of $T$ is minimized. Defining equivalence relations of the solutions based on a grouping of the letters is uncontrived in this problem: for the DNA or protein alphabet, the letters can be subdivided into structurally similar groups.

\subparagraph*{Example 2} 
The \textsc{Frequency assignment} problems are a family of problems that naturally arise in telecommunication networks, and that have been extensively studied in graph theory as a generalization of graph coloring known as the T-coloring problem \cite{Hale,Roberts,Tesman}. In the variant called the list T-coloring, the input is a graph $G$ representing the interference between radio stations, a set $C$ of colors, a function $S$ that gives for each vertex $v\in V(G)$ a set $S(v)\subseteq C$ of colors (possible frequencies for a station), and a set $T\subseteq C^2$ of forbidden pairs of colors (interfering frequencies). The goal is to find a coloring $\phi\colon V(G)\to C$ such that a $\forall v\in V(G)$, $\phi(v)\in S(v)$, and $\forall (u,v)\in E(G)$, $(\phi(u),\phi(v))\not\in T$. While this problem is hard in general, it can be solved by DP when the underlying graph is a tree. In this case, we can enumerate colorings of the input trees without any forbidden pair of colors on the edges. Defining equivalence relations by grouping some of the colors together (similar frequencies) can be a practical way of reducing the size of the output. 

\subparagraph*{Example 3} 
The \textsc{Tree Reconciliation} problem is the main  method for analyzing  the co-evolution of two sets of species, the hosts and their parasites \cite{Page:TangledTrees2003}. The input are two phylogenetic trees $H, P$ (of the hosts and of the parasites, respectively), together with a mapping $\phi_0\colon  \textsf{Leaves}(P) \to \textsf{Leaves}(H)$ that reflects the present-day parasite infections. What needs to be enumerated are then all past associations, that is, all mappings of the non-leaf nodes of the parasite tree to the nodes of the host tree that optimize a function which overall represents the sum of the number of different possible ``events'' weighted by the inverse of their estimated probability. The number of optimal solutions is often huge and, by applying our \cref{alg:Newone}, the enumeration of biologically inspired equivalence classes have allowed a significant reduction (in some cases from $10^{42}$ to only $96$ classes) of the size of the output while still preserving the important biological information (see \cite{capybara}).

\subsubsection{Dynamic programming on tree decomposition of a graph} 

Many graph problems can be solved in  polynomial time with a dynamic programming algorithm when the input graph has bounded treewidth (see for example \cite{Bodlaender}). The underlying idea is that, given a tree decomposition of a graph, the  dynamic programming algorithm  traverses the nodes (bags) of the decomposition and consecutively solves the respective sub-problems.
For vertex subset optimization problems, given a bag $X$, a dynamic programming algorithm generally computes for each $Z\subseteq X$ the optimal solution of the sub-problem whose intersection with $X$ is $Z$. In this context, we could define two solutions to be equivalent if they intersect each bag of the decomposition in an ``equivalent'' way. The equivalence relation on the solutions is then defined by an equivalence relation over the subsets of each bag, and two solutions $S_1$ and $S_2$ are equivalent if for all bags $X$ of the decomposition, $S_1\cap X$ is equivalent to $S_2\cap X$. 

One of the simplest examples is to consider that  all the nonempty subsets of vertices of a bag are equivalent. Thus, what we are interested in is whether a solution ``hits'' a bag ({i.e.}, whether it has a nonempty intersection with the vertices in the bag). Consequently, two solutions would be considered equivalent if they hit the same bags.

We can also consider two subsets of a bag to be equivalent if they have the same size. In this case, two solutions would be equivalent if each bag contains the same number of vertices in the two solutions. 

We believe that considering solutions in the way they are distributed along the tree decomposition of a graph could give a good overview of the diversity of the solution space.

\section{Conclusion and perspectives}
\label{sec:conclusions}

In this paper, we provide a  general framework 
for the enumeration of equivalence classes of solutions in polynomial delay for a wide variety of contexts. This work opens a door to different research directions. 
 
  It would be interesting to ask whether we can efficiently enumerate groups of solutions that result from classical clustering procedures, or one representative per group.
 Moreover, in this paper we heavily rely on the decomposability property of the structure of the solution space. It remains open whether the problem of enumerating equivalence classes is hard without this restriction.

\bibliographystyle{plainurl}
\bibliography{references}

\end{document}